\newtheorem{theorem}{Theorem}
\newtheorem{lemma}[theorem]{Lemma}
\newtheorem{proposition}[theorem]{Proposition}
\theoremstyle{remark}
\newtheorem*{remark}{Remark}
\newtheorem*{note}{Note}
\newcommand\ltilde[1]{%
\hspace{.4pt}\raisebox{-2.4pt}{%
\Large$\tilde{\smash{\hspace{-.4pt}\raisebox{2.4pt}{\normalsize$#1$}}}$}}
\newcommand\ltildes[1]{%
\hspace{.4pt}\raisebox{-1.6pt}{%
\large$\tilde{\smash{\hspace{-.4pt}\raisebox{1.6pt}{\small$#1$}}}$}}
\renewcommand\Pr{\mathbf P}
\newcommand\Ex{\mathbf E}
\newcommand\U{{\mathsf u}}
\newcommand\D{{\mathsf d}}
\newcommand\F{{\mathsf f}}
\newcommand\Fc{{\mathsf {f_c}}}
\let\epsilon\varepsilon
\newcommand\abs[1]{\lvert#1\rvert}
\newcommand\bigO{\mathcal O}
\newcommand\dx{\mathit{dx}}
\newcommand\atleast[2]{\langle#1\rangle_{#2}}
\DeclareRobustCommand\ttau{\smash{\ltilde\tau}}
\DeclareRobustCommand\ttaus{\smash{\ltildes\tau}}
\newcommand\hquad{\hspace{.5em}}
\newcommand\squad{\hspace{1.5em}}
\newenvironment{algo}{\begin{algorithm}[h!t]\DontPrintSemicolon}{\end{algorithm}}
\def\mathllapinternal#1#2{%
\llap{$\mathsurround=0pt#1{#2}$}}
\def\mathllap{\mathpalette\mathllapinternal}
\newenvironment{mycases}[2]{%
\def\NL{\\\hphantom{#2}\expandafter\mathllap}%
\begin{subnumcases}{#1}
\hphantom{#2}\expandafter\mathllap}%
{\end{subnumcases}\ignorespacesafterend}
\DeclareMathOperator\flip{flip}
\DeclareMathOperator\extend{extend}
\DeclareMathOperator\recover{recover}
\DeclareMathOperator\wt{wt}
\DeclareMathOperator\lift{lift}
\DeclareMathOperator\unif{Unif}
\DeclareMathOperator\poisson{Poisson}
\DeclareMathOperator\bernoulli{Bernoulli}
\tikzstyle{walk}=[]
\tikzstyle{walks}=[scale=.3,baseline=0pt]
\tikzstyle{dot}=[minimum size=1mm,inner sep=0pt,outer sep=0pt,circle,fill]
\newcommand\meander{
.. controls +(1,4) .. ++(1.5,2.5)
.. controls +(.5,-1.5) .. ++(1,-.5)
.. controls +(.5,1) .. ++(1,-.5)
.. controls +(.5,-1.5) .. ++(1.5,.5)
}
\newcommand\excursion{
.. controls +(.75,2) .. ++(1.125,1)
.. controls +(.375,-1) .. ++(.75,-.5)
.. controls +(.375,.5) .. ++(1.125,-.5)
}
\newcommand\tail{
.. controls +(.75,1.25) .. ++(1.25,.25) .. controls +(.25,-.25) .. ++(.75,-.25)
}
\newcommand\zigzag{
.. controls +(.75,0) .. ++(1.5,-1) .. controls +(.75,-1) .. ++(1.5,-1)
}
\title{Improving the Florentine algorithms: recovering algorithms for Motzkin
and Schröder paths}
\author{Axel Bacher}
\date\today
\begin{document}

\maketitle

\begin{abstract}
We present random sampling procedures for Motzkin and Schröder paths,
following previous work on Dyck paths. Our algorithms follow the anticipated
rejection method of the Florentine algorithms (Barcucci et al. 1994+), but
introduce a recovery idea to greatly reduce the probability of rejection. They
use an optimal amount of randomness and achieve a better time complexity than
the Florentine algorithms.
\end{abstract}

\section{Introduction} \label{sec:intro}

This paper discusses random sampling procedures for two classical families of 
lattice paths: Motzkin and Schröder paths, shown in Figure~\ref{fig:paths}. We
are interested in \emph{positive paths} (paths staying above the $x$-axis) and
\emph{excursions} (positive paths with final height zero). Together with Dyck
paths, Motzkin and Schröder paths are widely studied in combinatorics. Their
counting sequences are the Catalan, Motzkin and Schröder numbers; as can be
seen in their OEIS entries \cite{oeis} (A001405, A000108, A005773, A001006,
A026003, A006318 and related ones), they are in bijection with hundreds of
combinatorial objects---most notably, binary and unary-binary trees---making
interesting and useful the problem of their efficient random sampling.

\begin{figure}[ht]\small
\newcommand\upstep{ -- ++(1,1) node [dot] {}}%
\newcommand\downstep{ -- ++(1,-1) node [dot] {}}%
\newcommand\flatstep{ -- ++(1,0) node [dot] {}}%
\newcommand\Flatstep{ -- ++(2,0) node [dot] {}}%
\newcommand\walk[1]{
    \if#1u\upstep\fi
    \if#1d\downstep\fi
    \if#1f\flatstep\fi
    \if#1F\Flatstep\fi
    \if#1;\else\expandafter\walk\fi
}
\begin{center}
\begin{tikzpicture}[walks]
\draw[help lines] (0,0) -- ++(10,0);
\draw (0,0) node [dot] {} \walk uududduuud;;
\end{tikzpicture}\hfil%
\begin{tikzpicture}[walks]
\draw[help lines] (0,0) -- ++(10,0);
\draw (0,0) node [dot] {} \walk fufduudufu;;
\end{tikzpicture}\hfil%
\begin{tikzpicture}[walks]
\draw[help lines] (0,0) -- ++(10,0);
\draw (0,0) node [dot] {} \walk uFuddFuu;;
\end{tikzpicture}\hfil%
\end{center}
\caption{Dyck, Motzkin and Schröder positive paths of length~$10$.}
\label{fig:paths}
\end{figure}

The efficiency of an algorithm is measured, of course, by its time complexity
(since all the algorithms discussed here only need negligible storage in
addition to the output, space is not an issue). Also of interest for
randomized algorithms is entropy complexity, which is a measure of the
randomness consumed by the algorithm. Our model of entropy complexity closely
follows that of \cite{knuth}, which takes its roots in Shannon's information
theory and where the unit of complexity is the random bit. This framework aims
at capturing the cost of random primitives in a realistic way and avoids
unreasonable assumptions, like having access to random real numbers in
constant time.

\bigskip

Many algorithms exist for sampling lattice paths or plane trees: Boltzmann
samplers \cite{boltzmann}, Devroye's algorithm based on the cycle
lemma~\cite{devroye}, Rémy's algorithm \cite{remy}, etc.  However, none of
these algorithms have a linear complexity for exact-size sampling: Boltzmann
samplers provide approximate size, needing costly rejection to get exact size,
while both others use an entropy of $n\log n$.

The Florentine algorithms \cite{barcucci,barcucci2,penaud} are, for their
part, linear. They use an extremely simple method called \emph{anticipated
rejection}. To sample a positive path of length~$n$, the path is drawn step by
step at random, until either the length~$n$ is reached---at which point the
path is output---or the path goes below the $x$-axis---at which point the path
is deleted and the procedure started over. This is surprisingly efficient:
sampling a positive path of length~$n$ requires, on average, to draw $2n$
random steps ($n$ for the successful run, $n$ for all the failed runs). This
is due to the fact that rejection occurs, on average, on comparatively small
paths. A detailed analysis is found in \cite{louchard,rejection}. Anticipated
rejection was also used in random sampling algorithms for classes of trees,
with similar complexities \cite{motzkin}.

In the case of Dyck paths, an improved algorithm is given in \cite{mdyck}
(actually, it works in the slightly more general case of $m$-Dyck paths). The
idea of this algorithm, used before for binary trees in \cite{motzkin}, is a
\emph{recovery} method: it follows the Florentine algorithm, but if the path
goes below the $x$-axis, a ``recovering'' procedure is used to turn it into a
random positive path, from which the algorithm is resumed. By avoiding
rejection altogether, this algorithm only consumes asymptotically $n$ random
bits---which is optimal---and reads and writes~$5n/4$ steps, better than the
Florentine algorithm.

\bigskip

In this paper, we extend this recovery idea to Motzkin and Schröder paths. We
retain a small probability of rejection, but this does not affect the
complexity: the algorithms are still optimal in terms of entropy and have
better time complexity than their Florentine counterparts.

The paper is organized as follows: Section~\ref{sec:prelim} states general
remarks useful in all models; in Sections \ref{sec:motzkin} and
\ref{sec:schroeder}, we give the algorithms for Motzkin and Schröder paths,
respectively; finally, the complexity analysis is done in
Section~\ref{sec:analysis}.

\section{Preliminaries} \label{sec:prelim}

We start by giving basic definitions and notations. We denote by $\U$, $\F$
and $\D$ the up, flat and down steps, with height~$1$, $0$ and~$-1$
respectively. A \emph{path} is a word on $\{\U,\F,\D\}$; the \emph{height} of
a path~$\omega$, denoted by~$h(\omega)$, is the sum of the heights of its
steps. A path is \emph{positive} if all its prefixes have height~$\ge0$; an
\emph{excursion} is a positive path with height~$0$; a path is
\emph{Łukasiewicz} if all its prefixes are positive except the path itself,
which has height~$<0$. We denote by $\epsilon$ the empty path.

\subsection{Unfolding and folding}

In all three models, an essential ingredient of our algorithms is a classical
bijection (see, e.g., \cite[Chapter~9]{lothaire}), which we call \emph{unfolding}. This
bijection is illustrated in Figure~\ref{fig:unfold}. Consider a Łukasiewicz
path factorized as $\sigma\tau$, with $\tau\ne\epsilon$. If $h(\sigma) = k$,
the path~$\tau$ is of the form:
\[\tau = \tau_k\D\dotsm\tau_0\D\text,\]
where the $\tau_i$'s are excursions (Figure~\ref{fig:unfold}, left). Define:
\begin{equation} \label{tilde}
\ttau = \U\tau_k\dotsm\U\tau_0\text.
\end{equation}

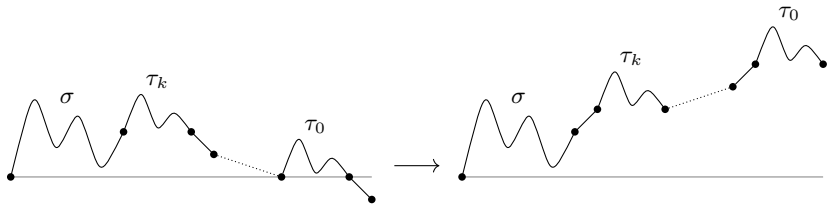
\begin{figure}[ht]\small
\begin{center}
\begin{tikzpicture}[walks]
\draw [help lines] (0,0) -- ++(16,0);
\draw [walk] (0,0) node [dot] {} \meander node [dot] {}
\excursion node [dot] {} -- ++(1,-1) node (a) [dot] {};
\draw [densely dotted] (a) -- ++(3,-1) node (b) [dot] {};
\draw[walk] (b) \excursion node [dot] {} -- ++(1,-1) node [dot] {};
\node at (2.5,3.5) {$\sigma$};
\node at (6.5,4.25) {$\tau_k$};
\node at (13.5,2.25) {$\tau_0$};

\draw[->] (17,.5) -- ++(2,0);

\begin{scope}[xshift=20cm]
\draw[help lines] (0,0) -- ++(16,0);
\draw[walk] (0,0) node [dot] {} \meander node [dot] {}
-- ++(1,1) node [dot] {} \excursion node (a) [dot] {};
\draw [densely dotted] (a) -- ++(3,1) node (b) [dot] {};
\draw (b) -- ++(1,1) node [dot] {} \excursion node [dot] {};
\node at (2.5,3.5) {$\sigma$};
\node at (7.5,5.25) {$\tau_k$};
\node at (14.5,7.25) {$\tau_0$};
\end{scope}
\end{tikzpicture}
\end{center}
\caption{The unfolding operation, turning the Łukasiewicz path $\sigma\tau$
(left) into the positive path of odd height $\sigma\ttaus$ (right).}
\label{fig:unfold}
\end{figure}

\begin{proposition} \label{prop:unfold}
Every positive path of odd height can be written in a unique way as
$\sigma\ttau$, where $\sigma\tau$ is a Łukasiewicz path.
\end{proposition}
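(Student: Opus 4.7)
The plan is to invert the unfolding map explicitly. Given a positive path~$\omega$ of odd height $2k+1$, I would take $\sigma$ to be the longest prefix of $\omega$ ending at height~$k$; such a prefix exists because $\omega$ starts at height~$0\le k$ and ends at height $2k+1>k$, while consecutive heights along $\omega$ differ by at most~$1$. Writing $\omega=\sigma\rho$, the suffix $\rho$ starts at height~$k$, ends at height~$2k+1$, and, by maximality of $\sigma$, its intermediate heights never return to~$k$; combined with the step set $\{\U,\F,\D\}$, this forces $\rho$ to stay strictly above~$k$ after its first step.

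Next, I would decompose $\rho$ as $\U\tau_k\U\tau_{k-1}\cdots\U\tau_0$ with each $\tau_i$ an excursion, by an iterated first-passage argument. The first step of $\rho$ must be $\U$ (the only step from height~$k$ landing strictly above~$k$), bringing the path to height $k+1$. Let $\tau_k$ be the portion of $\rho$ from that point up to its last visit to height $k+1$: this is an excursion at height $k+1$. After $\tau_k$, the path never revisits height $k+1$, so the next step is again forced to be $\U$, and the argument iterates until the final height $2k+1$ is reached, producing exactly $k+1$ blocks $\U\tau_i$. Setting $\tau=\tau_k\D\cdots\tau_0\D$, a straightforward height check gives $h(\sigma\tau)=-1$ with all proper prefixes non-negative, so $\sigma\tau$ is Łukasiewicz, and by construction $\ttau=\rho$, proving existence.

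For uniqueness, I would note that in any valid decomposition $\omega=\sigma\ttau$ one has $h(\ttau)=h(\sigma)+1$ (the $k+1$ up-steps introduced in $\ttau$ balance the $k+1$ down-steps removed from $\tau$), so $h(\omega)=2h(\sigma)+1$ forces $h(\sigma)=k$. The shape $\ttau=\U\tau_k\U\tau_{k-1}\cdots\U\tau_0$ further shows that $\ttau$ rises above height~$k$ on its first step and stays strictly above~$k$ thereafter, so $\sigma$ must end at the last visit of $\omega$ to height~$k$. This pins down $\sigma$, hence also $\ttau=\rho$, and the inner factorization of $\rho$ into the blocks $\U\tau_i$ is forced by the same iterated first-passage argument, determining each $\tau_i$. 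The only genuinely technical point is the uniqueness of this first-passage factorization of $\rho$, which I would settle by a short induction on~$k$.
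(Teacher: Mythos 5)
Your argument is correct and follows essentially the same route as the paper: you invert the unfolding by identifying $\sigma$ as the prefix ending at the last visit of $\omega$ to height $k$, with the height count forcing $h(\sigma)=k$, exactly as in the paper's (much terser) proof. Your version simply spells out the last-visit decomposition of the suffix into the blocks $\U\tau_i$ and the verification that $\sigma\tau$ is Łukasiewicz, details the paper leaves implicit.
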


\begin{proof}
Let $\omega$ be a positive path of height~$2k+1$. According to the definition
\eqref{tilde}, it can be written $\sigma\ttau$ only if $\ttau$ is the suffix
going up to the last visit at height~$h(\sigma)$. Moreover, the path
$\sigma\tau$ is Łukasiewicz if and only if $h(\sigma) = k$. This shows the
proposition.
\end{proof}

In the following, we call \emph{mid-height factorization} of~$\omega$ the
factorization~$\sigma\ttau$.

\begin{remark}
For the purpose of this paper, it is also acceptable, and perhaps simpler, to
define $\ttau$ as the \emph{mirror} of $\tau$ (read $\tau$ backwards and
change every $\U$ to a $\D$ and vice versa). We prefer the definition
\eqref{tilde} because it seems more robust theoretically (the mirror does not
work in the case of $m$-Dyck paths discussed in \cite{mdyck}) and because it
only involves reading $\tau$ once in the forward direction, which is better in
practice.
\end{remark}

\subsection{Structure of a recovering algorithm}

Like in the Florentine algorithms, recovering algorithms build a path by
adding random steps drawn according to some basic distribution. They also use
a function, which we denote by $\recover$, operating from Łukasiewicz paths to
positive paths. The general structure is as follows.

\begin{algo}
\caption{Recovering algorithm for a random positive path of length~$n$}
\label{algo:struct}
$\omega\leftarrow\epsilon$\;
\While{$\abs\omega < n$}{
    add a random step to~$\omega$\;
    \lIf{$h(\omega) < 0$}{$\omega\leftarrow\recover(\omega)$}
}
\Return{$\omega$}
\end{algo}

If $n\ge0$, consider the random path when it reaches a length at least~$n$ for
the first time. Let $\mathcal P_n$ be the distribution of that path
conditioned to be positive and $\mathcal L_n$ be the distribution of that path
conditioned to be Łukasiewicz.

\begin{theorem} \label{thm:recover}
Assume that the $\recover$ function, when its input is distributed
like~$\mathcal L_n$, outputs a path distributed like $\mathcal P_n$. Then
Algorithm~\ref{algo:struct} outputs a path distributed like $\mathcal P_n$.
\end{theorem}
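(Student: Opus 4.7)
The plan is induction on $m$ with the invariant that at the first moment $\abs\omega \ge m$ during the execution of Algorithm~\ref{algo:struct}, the path $\omega$ is distributed as $\mathcal P_m$. The theorem is then the case $m = n$, since the while loop terminates precisely at the first moment $\abs\omega \ge n$. The base case $m = 0$ is immediate: initially $\omega = \epsilon$, the unique positive path of length~$0$.

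For the inductive step, assume the invariant at $m$ and analyse the next iteration, which appends a random step $s$. Since $\omega \sim \mathcal P_m$ is, by definition, the raw walk at the first moment $\abs\omega \ge m$ conditioned on positivity, the extended path $\omega s$ follows the raw-walk distribution at the first moment $\abs{\omega s} \ge m + 1$ conditioned only on positivity at length $m$. Any such path is either itself positive (if $h(\omega s) \ge 0$) or Łukasiewicz (if $h(\omega s) < 0$), because all its prefixes of length $\le m$ are already positive by assumption. Splitting according to the sign of $h(\omega s)$ therefore yields exactly $\mathcal P_{m+1}$ and $\mathcal L_{m+1}$ conditionally on their respective branches. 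The positive branch preserves the invariant directly; the Łukasiewicz branch does so via the $\recover$ call, which by hypothesis maps $\mathcal L_{m+1}$ to $\mathcal P_{m+1}$. Both branches produce $\mathcal P_{m+1}$, so their mixture does too, closing the induction. In the Schröder case, where a step may extend the length by~$2$ rather than~$1$, the argument is unchanged: the length-$2$ step has height~$0$, so it cannot trigger the Łukasiewicz branch, and the remaining length-$1$ cases behave exactly as above.

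The only conceptual subtlety---rather than a genuine technical obstacle---is that the theorem's hypothesis is stated at the fixed length $n$ while the induction invokes its analogue at every intermediate length $m + 1 \le n$. Since the $\recover$ functions constructed in Sections~\ref{sec:motzkin} and~\ref{sec:schroeder} are defined uniformly in the length, the hypothesis should be read as a property of $\recover$ holding at every length, with the stated instance at $n$ being the one used at the last step of the induction. Beyond this bookkeeping point, the proof is a direct application of Bayes' rule combined with the characterisation of Łukasiewicz paths as ``positive prefix, negative endpoint''.
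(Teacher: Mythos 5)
Your proof is correct and follows essentially the same route as the paper: an induction on the length, splitting the path at the first moment its length reaches $m+1$ into the positive case (already $\mathcal P_{m+1}$) and the Łukasiewicz case (turned into $\mathcal P_{m+1}$ by $\recover$). The extra details you supply---the conditioning argument via the inclusion of events, the Schröder length-$2$ step, and the observation that the hypothesis on $\recover$ must hold at every intermediate length---only make explicit what the paper's one-paragraph proof leaves implicit.
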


\begin{proof}
By induction, assume that the path~$\omega$ is
distributed like $\mathcal P_{n-1}$ when it first reaches a length~$\ge n-1$.
When it first reaches a length~$\ge n$, it is either positive, in which case
it is distributed like~$\mathcal P_n$, or Łukasiewicz, in which case it is
distributed like~$\mathcal L_n$. After recovering, it is therefore distributed
like~$\mathcal P_n$.
\end{proof}

In the case of Dyck paths \cite{mdyck}, the recover function works by taking a
random factorization $\omega = \sigma\tau$ and unfolding; the result is a
uniformly distributed positive path by Proposition~\ref{prop:unfold}.

In the cases of Motzkin and Schröder paths presented in this paper, the
recover function does not work so perfectly: we retain some measure of
anticipated rejection. To represent this, we define it as a \emph{partial}
function, meaning that it may be undefined with some probability. By
convention, whenever an algorithm computes an undefined result, it immediately
rejects the sample and terminates. The algorithm is then restarted until it
produces an output. For the recovering algorithm to work, the output of the
recover function only needs to follow the distribution~$\mathcal P_n$ when it
is defined.

If $\omega$ is a path, denote by $\atleast\omega k$ the path $\omega$ if its
height is at least~$k$ and undefined otherwise.

\section{Motzkin paths} \label{sec:motzkin}

In the Motzkin case, we build the path~$\omega$ by adding $\U$, $\D$ and $\F$
steps with probability~$1/3$, $1/3$ and $1/3$. The distributions $\mathcal
P_n$ and $\mathcal L_n$ are the uniform distributions on positive and
Łukasiewicz paths of length~$n$, respectively.

\subsection{The recover operation}

The difficulty in constructing a recovering procedure for Motzkin paths is the
fact that, for any given~$n$, there are Motzkin positive paths of length~$n$
of both odd and even height. Since unfolding only produces paths of odd
height, we need a way to turn a path of odd height into one of even height.
This is the purpose of the following involution, defined for paths which are
not all $\D$ steps:
\begin{equation}
\flip\colon\sigma\F\D^k\longleftrightarrow\sigma\U\D^k\text.
\end{equation}
This operation is illustrated in Figure~\ref{fig:flip}.

\begin{figure}[ht]\small
\begin{center}
\begin{tikzpicture}[walks]
\draw[walk] (0,0) \tail node [dot] {} -- ++(1,0) node [dot] {} -- ++(1,-1)
coordinate (a);
\draw[densely dotted] (a) -- ++(1,-1) coordinate (b);
\draw[walk] (b) -- ++(1,-1) node [dot] {};
\draw[<->] (7,.5) -- ++(2,0);
\begin{scope}[xshift=10cm]
\draw[walk] (0,0) \tail node [dot] {} -- ++(1,1) node [dot] {} -- ++(1,-1)
coordinate (a);
\draw[densely dotted] (a) -- ++(1,-1) coordinate (b);
\draw[walk] (b) -- ++(1,-1) node [dot] {};
\end{scope}
\end{tikzpicture}
\end{center}
\caption{The $\flip$ operation, which changes the parity of the final height.}
\label{fig:flip}
\end{figure}
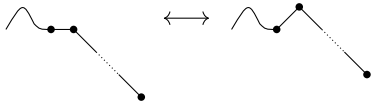

Let $q_n = 1/(2n+1)$. We define the $\recover$ operation, which takes a
Łukasiewicz path of length~$n$ and outputs a random positive path as follows:
\begin{mycases}{\recover\colon}{\sigma\tau}
{\sigma\tau}\mapsto\sigma\ttau&with proba.\hquad$q_n$, \label{ra}\NL
{\sigma\tau}\mapsto\flip(\sigma\ttau)&with proba.\hquad$q_n$, \label{rb}\NL
{\omega}\mapsto\atleast{\flip(\omega)}0&with proba.\hquad $q_n$. \label{rc}
\end{mycases}
Since there are $n$ possible factorizations $\omega = \sigma\tau$ with
$\tau\neq\epsilon$, the cases \eqref{ra} and \eqref{rb} are taken with
probability~$n/(2n+1)$ each for any given~$\omega$. Thus,
if~$\atleast{\flip(\omega)}0$
is undefined (either $\omega = \D$ or $\omega$ ends with $\U\D^k$), then
rejection occurs with probability $q_n$.

\begin{lemma} \label{lem:motzkin}
Assume that $\omega$ is equal to every Łukasiewicz path of length~$n$ with
probability~$p$. Then, for every positive path $\omega'$ of length~$n$, we
have:
\[\Pr\bigl[\recover(\omega) = \omega'\bigr] = pq_n\text.\]
\end{lemma}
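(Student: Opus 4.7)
The plan is to partition the positive paths of length $n$ into classes according to which case of $\recover$ can produce them, and to show that in every class exactly one case contributes $pq_n$ to $\Pr[\recover(\omega)=\omega']$ while the others contribute~$0$. The four classes I will use are: odd height; even height at least~$2$; height~$0$ ending in $\F\D^k$; height~$0$ ending in $\U\D^k$.

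For odd-height $\omega'$, only case~\eqref{ra} can contribute, since case~\eqref{rb} toggles parity and case~\eqref{rc} always produces a path of height~$0$. Proposition~\ref{prop:unfold} supplies the unique factorization $\omega'=\sigma\ttau$ with $\sigma\tau$ Łukasiewicz, and this specific factorization is chosen by case~\eqref{ra} with probability $q_n$, for a total contribution of $pq_n$.

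For the next two classes, case~\eqref{rb} is the carrier. Since $\flip$ is an involution, $\flip(\sigma\ttau)=\omega'$ rearranges to $\sigma\ttau=\flip(\omega')$, so I will perform a short tail-prefix computation to show that $\flip(\omega')$ is positive of odd height whenever $\omega'$ has even height $\ge 2$ or has height~$0$ and ends in $\F\D^k$, and then invoke Proposition~\ref{prop:unfold} to extract the unique $\sigma\tau$ and conclude that case~\eqref{rb} contributes $pq_n$.

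The last class---height~$0$ ending in $\U\D^k$---is precisely where this approach fails: $\flip(\omega')$ turns out to be Łukasiewicz rather than positive, so case~\eqref{rb} contributes nothing. The remedy is case~\eqref{rc}: I will verify directly that $\omega := \flip(\omega')$ is a Łukasiewicz path ending in $\F\D^k$, so that case~\eqref{rc} applied to this $\omega$ survives the $\atleast{\cdot}{0}$ guard and returns $\omega'$, contributing $pq_n$. The main obstacle is the bookkeeping: the four classes must truly partition the positive paths of length $n$, and the $\atleast{\cdot}{0}$ rejection in case~\eqref{rc} must be shown to block exactly the Łukasiewicz inputs ending in $\U\D^k$, so that the roles of cases \eqref{rb} and~\eqref{rc} do not overlap.
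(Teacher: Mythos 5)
Your proposal is correct and takes essentially the same route as the paper: the paper's proof likewise argues case by case that each positive $\omega'$ is produced by exactly one of \eqref{ra}, \eqref{rb}, \eqref{rc} in exactly one way (odd height via \eqref{ra}; even height with $\flip(\omega')$ positive via \eqref{rb}; height zero with $\flip(\omega')$ Łukasiewicz via \eqref{rc}), with uniqueness from Proposition~\ref{prop:unfold}, and your four classes are just a refinement of these three since even height $\ge 2$ and excursions ending in $\F\D^k$ both have positive flip. The only micro-caveat is that the guard in \eqref{rc} also rejects the input $\omega=\D$, where $\flip$ is undefined, but this is irrelevant to which $\omega'$ are produced and does not affect the argument.
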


\begin{proof}
We distinguish three cases: if $\omega'$ has odd height, it can only be built
with \eqref{ra}; if $\omega'$ has even height and $\flip(\omega')$ is
positive, it can only be built by \eqref{rb}; if $\omega'$ has height zero and
$\flip(\omega)$ is Łukasiewicz, it can only be built by \eqref{rc}. In all
three cases, by Proposition~\ref{prop:unfold}, there is only one way to build
$\omega'$; thus, it is output with probability $pq_n$.
\end{proof}

\subsection{Main algorithms}

We are now ready to write the algorithms sampling random Motzkin positive
paths and excursions. We only give here the proofs of their correction;
complexity is discussed in Section~\ref{sec:analysis}.

\begin{algo}
\caption{Random Motzkin positive path of length~$n$} \label{algo:motzkin}
$\omega\leftarrow\epsilon$\;
\For{$i=1,\dotsc,n$}{
    $\omega\leftarrow\omega\U$, $\omega\F$ or $\omega\D$ with probabilities
$1/3$, $1/3$ and $1/3$\;
    \lIf{$h(\omega) = -1$}{$\omega\leftarrow\recover(\omega)$}
}
\Return{$\omega$}
\end{algo}

\begin{algo}
\caption{Random Motzkin excursion of length~$n$} \label{algo:motzkinexc}
$\omega\leftarrow$ random Motzkin positive path of length~$n+1$\;
\lIf{$h(\omega)$ is even}{$\omega\leftarrow\atleast{\flip(\omega)}1$}
$\sigma\ttau\leftarrow$ mid-height factorization of $\omega$\;
\Return{$\sigma\tau$ minus the last $\D$ step}
\end{algo}

The uniformity of the output of Algorithm~\ref{algo:motzkin} is an immediate
consequence of Theorem~\ref{thm:recover} and Lemma~\ref{lem:motzkin}.
To show that the excursion output by Algorithm~\ref{algo:motzkinexc} is
uniform, let $p$ be the probability of any positive path of length~$n+1$ to be
drawn at line~1. After line~2, the path~$\omega$ is equal to every positive
path of odd height with probability~$2p$ and, after line~3, to every
Łukasiewicz path with probability~$2(n+1)p$.

\subsection{Colored Motzkin paths}

In this section, we consider Motzkin paths where the flat step carries a given
positive real weight (this may be the case if there are several kinds of flat
steps, hence the name colored Motzkin paths). We call \emph{weight} of a
path~$\omega$ and denote by $\wt(\omega)$ the product of the weights of its
steps. Florentine algorithms for these paths are discussed in
\cite{barcucci2}.

It is also possible to define colored Motzkin paths with a weight for the up
step, but this is not as interesting: if that weight is $>1$ (positive drift),
paths naturally go away from the $x$-axis and the Florentine algorithm is
already asymptotically optimal; if it is $<1$ (negative drift), paths
naturally go below the $x$-axis and the Florentine algorithm is exponential.
If we are interested in excursions, we do not lose any generality by assuming
that the weight of~$\U$ is~$1$. In any case, we need this condition so that
the unfold function does not change the weight of the path.

We further impose that the weight of the $\F$ step is greater than~$1$. In
this case, we may assume that there are four kinds of steps: $\U$, $\F$ and
$\D$, with weight~$1$ each, and a fourth kind, $\Fc$, with a weight~$c > 0$.
If $\omega$ is a colored Motzkin path, define the \emph{flippable step}
(FS) of~$\omega$ to be the last $\U$ or $\F$ step, if it exists. Let
$\flip(\omega)$ be the path obtained by changing the FS from $\U$ to $\F$ or
vice versa.

Let $q_n = 1/[2n+\max(1,c)]$. Define the new recover function as follows:
\begin{mycases}{\recover\colon}{\omega\D}
{\sigma\tau}\mapsto\sigma\ttau&
\text{with proba.\hquad$q_n$,} \label{rca}\NL
{\sigma\tau}\mapsto\flip(\sigma\ttau)&
\text{with proba.\hquad$q_n$,} \label{rcb}\NL
{\omega}\mapsto\flip(\omega)&
\text{with proba.\hquad$q_n\hphantom c$\squad if FS $ = \F$,}
\label{rcc}\NL
{\omega\D}\mapsto\omega\Fc&
\text{with proba.\hquad$q_nc$\squad otherwise.}
\label{rcd}
\end{mycases}
By construction, the probabilities sum to at most~$1$. Rejection occurs for
some paths when $c\ne1$. When $c = 1$, there is no rejection; this is the case
of bicolored Motzkin paths, which are famously counted by the Catalan numbers.
The last two cases are illustrated in Figure~\ref{fig:flips}.

\begin{figure}[ht]\small
\begin{center}
\begin{tikzpicture}[walks]
\draw [help lines] (0,0) -- ++(7,0);
\draw (0,2) \tail node [dot] {} -- node [above] {$\F$} ++(1,0) node (a) [dot] {};
\draw[densely dotted] (a) \zigzag node (b) [dot] {};
\draw (b) -- ++(1,-1) node [dot] {};
\draw[->] (8,.5) -- node [above] {\eqref{rcc}} ++(2,0);

\begin{scope}[xshift=11cm]
\draw [help lines] (0,0) -- ++(7,0);
\draw (0,2) \tail node [dot] {} -- ++(1,1) node (a) [dot] {};
\draw[densely dotted] (a) \zigzag node (b) [dot] {};
\draw (b) -- ++(1,-1) node [dot] {};
\end{scope}

\begin{scope}[xshift=21cm]
\draw [help lines] (0,0) -- ++(7,0);
\draw (0,1) \tail node [dot] {} -- ++(1,1) node (a) [dot] {};
\draw[densely dotted] (a) \zigzag node (b) [dot] {};
\draw (b) -- ++(1,-1) node [dot] {};
\draw[->] (8,.5) -- node [above] {\eqref{rcd}} ++(2,0);

\begin{scope}[xshift=11cm]
\draw [help lines] (0,0) -- ++(7,0);
\draw (0,1) \tail node [dot] {} -- ++(1,1) node (a) [dot] {};
\draw[densely dotted] (a) \zigzag  node (b) [dot] {};
\draw (b) -- node [above] {$\Fc$} ++(1,0) node [dot] {};
\end{scope}
\end{scope}
\end{tikzpicture}
\end{center}
\caption{Left: the case \eqref{rcc}, producing excursions with a $\U$ FS
ending with $\D$. Right: the case \eqref{rcd}, producing excursions with a
$\U$ FS ending with~$\Fc$ (as well as the excursion with all $\Fc$ steps). The
dotted part consists of $\D$ and $\Fc$ steps.}
\label{fig:flips}
\end{figure}
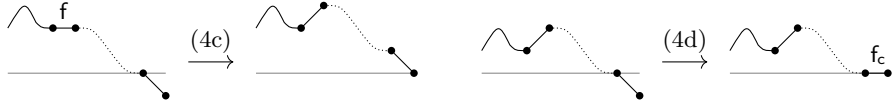

\begin{lemma}
Assume that $\omega$ is equal to every Łukasiewicz path of length~$n$ with
probability $p\wt(\omega)$. Then, for every positive path~$\omega'$ of
length~$n$, we have:
\[\Pr\bigl[\recover(\omega) = \omega'\bigr] = pq_n\wt(\omega')\text.\]
\end{lemma}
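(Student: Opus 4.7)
The plan is to mirror the proof of Lemma~\ref{lem:motzkin}: for a fixed positive path~$\omega'$ of length~$n$, I identify the unique case of $\recover$ and the unique input producing~$\omega'$, then compute the resulting probability. The new ingredients beyond the uncolored proof are tracking weights and accommodating the new case~\eqref{rcd}. For the weight bookkeeping, unfolding preserves weight (it swaps the $k{+}1$ down steps delimiting the $\tau_i$'s for $k{+}1$ up steps, leaving every other step untouched) and the colored flip preserves weight (it swaps a~$\U$ and an~$\F$), so cases~\eqref{rca},~\eqref{rcb}, and~\eqref{rcc} all return a path of the same weight as their input. Case~\eqref{rcd} multiplies the weight by~$c$, a change exactly compensated by its probability factor~$q_nc$.

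I would then perform a case analysis on~$\omega'$, verifying that in each case there is exactly one way to build~$\omega'$. If $\omega'$ has odd height, only~\eqref{rca} applies, with input $\sigma\tau$ given by the unique mid-height factorization $\omega'=\sigma\ttau$ from Proposition~\ref{prop:unfold}. If $\omega'$ has positive even height, only~\eqref{rcb} applies: after the FS the path uses only $\D$ and~$\Fc$ steps, so heights there decrease monotonically from the FS down to $h(\omega')\ge2$; hence $\flip(\omega')$ is still a positive path of odd height (whether FS is~$\U$ or~$\F$), and Proposition~\ref{prop:unfold} gives the unique input $\sigma\tau$ with $\flip(\omega')=\sigma\ttau$. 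If $\omega'$ is an excursion with FS$=\F$, again only~\eqref{rcb} applies, since flipping $\F$ to~$\U$ only raises heights. If $\omega'$ is an excursion with FS$=\U$ ending in~$\D$, only~\eqref{rcc} applies: the monotone non-increase of heights after the FS, combined with the last step being~$\D$, ensures that $\omega'$ reaches height~$0$ only at the final position, so subtracting~$1$ via the flip keeps the input Łukasiewicz. Finally, if $\omega'$ ends in~$\Fc$ with FS$\ne\F$ (in particular if $\omega'=\Fc^n$, in which case there is no FS), only~\eqref{rcd} applies, with the unique Łukasiewicz input obtained by replacing the trailing~$\Fc$ with~$\D$.

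Combining the weight relation between input and output with the probability of each case ($q_n$ for \eqref{rca}--\eqref{rcc}, $q_nc$ for~\eqref{rcd}) gives $p\,q_n\,\wt(\omega')$ in every subcase, as required. The main obstacle I anticipate is verifying that the subcases above are both exhaustive and disjoint: in particular, for an excursion with FS$=\U$ the choice between \eqref{rcc} and~\eqref{rcd} is governed by whether $\omega'$ ends in~$\D$ or~$\Fc$; this is in turn governed by whether the height of~$\omega'$ reaches~$0$ only at the final position or strictly before it, which is exactly what decides whether the flipped path satisfies the Łukasiewicz constraint.
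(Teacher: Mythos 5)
Your proof is correct and follows essentially the same route as the paper's: the same four-way classification (odd height via \eqref{rca}; even non-zero height and excursions with an $\F$ FS via \eqref{rcb}; excursions with a $\U$ FS ending in $\D$ via \eqref{rcc}; excursions with a $\U$ FS ending in $\Fc$, plus $\Fc^n$, via \eqref{rcd}), combined with the observation that $\recover$ preserves weight except in case \eqref{rcd}, where the added $\Fc$ step is exactly compensated by the probability factor $q_nc$. Your write-up merely makes explicit the positivity/Łukasiewicz verifications the paper leaves implicit; just note that your final case should be read as ranging over excursions only (paths of positive even height ending in $\Fc$ are already, and correctly, covered by your \eqref{rcb} case).
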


\begin{proof}
A positive path can be built by the $\recover$ function in four different
ways: paths with odd height are built with \eqref{rca}; paths with even
non-zero height and excursions with an $\F$ FS are built with \eqref{rcb};
excursions with a $\U$ FS and ending with $\D$ are built with
\eqref{rcc}; excursions with a $\U$ FS and ending with $\Fc$, plus the
excursion $\Fc^n$, are built with \eqref{rcd}.

Every path can be built exactly once in this way. Moreover, the number of
steps~$\Fc$ is not changed by $\recover$, except in the case \eqref{rcd},
which adds one $\Fc$ step and happens with a probability multiplied by~$c$.
\end{proof}

To sample a positive path or excursion, we use Algorithm~\ref{algo:motzkin} or
\ref{algo:motzkinexc} with the new $\recover$ function. The proofs are
identical.

\section{Schröder paths} \label{sec:schroeder}

In a Schröder path, flat steps have length~$2$ instead of~$1$. To avoid
ambiguities, we denote by $\abs\omega$ the number of steps of~$\omega$, by
$\abs\omega_\F$ its number of flat steps, and by $\ell(\omega) = \abs\omega +
\abs\omega_\F$ its length. Following \cite{penaud}, we build Schröder paths by
adding $\U$, $\F$ and $\D$ steps with probabilities $r$, $r^2$ and $r$, where
$r = \sqrt2-1$, satisfying $2r + r^2 = 1$, is the radius of convergence of the
generating function of Schröder paths.

Like a Dyck path, the length and height of a Schröder path have the same
parity.
When $n$ is odd, $\mathcal L_n$ is the uniform distribution on Łukasiewicz
paths of length~$n$. However, a positive path may either reach length~$n$
exactly or overstep it with a flat step. Therefore, a path distributed
like~$\mathcal P_n$ is equal to every positive path of length~$n$ with
probability proportional to~$1$ and to every positive path of length~$n+1$
ending with~$\F$ with probability proportional to~$r$.

\subsection{The recover operation}

The recover operation should take a Łukasiewicz path of length~$n$ and output
a path of length either~$n$ or~$n+1$. Our first tool is the following random
function, which extends a path of length~$n$ into a path of length~$n+1$ in a
recursive manner:
\begin{mycases}{\extend\colon}{\omega\D}
{\omega}\mapsto\omega\U&with proba.\hquad$r$, \label{ea}\NL
{\omega}\mapsto\omega\D&with proba.\hquad$r$, \label{eb}\NL
{\omega\U}\mapsto\omega\F&with proba.\hquad$r^2$, \label{ec}\NL
{\omega\D}\mapsto\omega\F&with proba.\hquad$r^2$, \label{ed}\NL
{\omega\F}\mapsto\extend(\omega)\F&with proba.\hquad$r^2$. \label{ee}
\end{mycases}
Rejection occurs with probability $r^2$ if $\omega = \epsilon$ and, because of
the recursion, with probability $r^{2k+2}$ if $\omega = \F^k$. This function
is illustrated in Figure~\ref{fig:extend}.

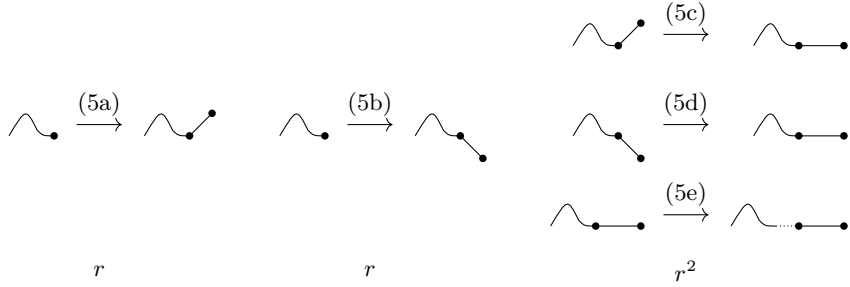
\begin{figure}[ht]\small
\begin{center}
\begin{tikzpicture}[walks]
\draw[walk] (0,0) \tail node [dot] {};
\draw[walk] (6,0) \tail node [dot] {} -- ++(1,1) node [dot] {};
\draw[->] (3,.5) -- node [above] {\eqref{ea}} ++(2,0);
\node at (4,-6) {$r$};
\begin{scope}[xshift=12cm]
\draw[walk] (0,0) \tail node [dot] {};
\draw[walk] (6,0) \tail node [dot] {} -- ++(1,-1) node [dot] {};
\draw[->] (3,.5) -- node [above] {\eqref{eb}} ++(2,0);
\node at (4,-6) {$r$};
\end{scope}
\begin{scope}[xshift=24cm]
\begin{scope}[yshift=4cm]
\draw[walk] (1,0) \tail node [dot] {} -- ++(1,1) node [dot] {};
\draw[walk] (9,0) \tail node [dot] {} -- ++(2,0) node [dot] {};
\draw[->] (5,.5) -- node [above] {\eqref{ec}} ++(2,0);
\end{scope}

\begin{scope}
\draw[walk] (1,0) \tail node [dot] {} -- ++(1,-1) node [dot] {};
\draw[walk] (9,0) \tail node [dot] {} -- ++(2,0) node [dot] {};
\draw[->] (5,.5) -- node [above] {\eqref{ed}} ++(2,0);
\node at (6,-6) {$r^2$};
\end{scope}

\begin{scope}[yshift=-4cm]
\draw[walk] (0,0) \tail node [dot] {} -- ++(2,0) node [dot] {};
\draw[walk] (8,0) \tail coordinate (a);
\draw[walk,densely dotted] (a) -- ++(1,0) node (b) [dot] {};
\draw[walk] (b) -- ++(2,0) node [dot] {};
\draw[->] (5,.5) -- node [above] {\eqref{ee}} ++(2,0);
\end{scope}
\end{scope}
\end{tikzpicture}\vspace{-.3cm}
\end{center}
\caption{The three ways of extending a Schröder path: with probability~$r$,
adding an up step~\eqref{ea}; with probability~$r$, adding a down
step~\eqref{eb}; with probability~$r^2$, turning the last up~\eqref{ec} or
down~\eqref{ed} step into a flat step, or recursively extend the part before
the last flat step~\eqref{ee}.}
\label{fig:extend}
\end{figure}

\begin{lemma} \label{lem:extend}
Assume that $\omega$ is equal to every positive path of length~$n$ with some
probability~$p$. For every positive path~$\omega'$ of length~$n+1$ and
height $>0$, we have:
\[\Pr\bigl[\extend(\omega) = \omega'\bigr] = pr\text.\]
\end{lemma}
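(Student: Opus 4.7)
The plan is to fix $\omega'$, expand
$\Pr[\extend(\omega) = \omega']
= \sum_\omega p\cdot\Pr[\extend(\omega) = \omega' \mid \omega]$
over positive inputs $\omega$ of length $n$, and identify which clause of $\extend$ can possibly produce $\omega'$ from which input. I would then case-split on the last step of $\omega'$ and argue by induction on $n$ to handle the recursive clause~\eqref{ee}.

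First I would handle the easy endings. If $\omega'$ ends in a $\U$, only clause~\eqref{ea} can output $\omega'$, with the unique input $\omega = \omega'\setminus\U$ (a positive path of length $n$); this yields a contribution of $p\cdot r$. Similarly, if $\omega'$ ends in $\D$, only clause~\eqref{eb} applies, with input $\omega'\setminus\D$, which remains positive precisely because $h(\omega')>0$ ensures the height before the final $\D$ is $\ge 1$; this again gives $p\cdot r$, matching the claim.

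The substantive case is $\omega' = \omega_0\F$. Three subcases contribute: clause~\eqref{ec} with input $\omega_0\U$; clause~\eqref{ed} with input $\omega_0\D$, which is positive because $\omega_0$ ends at the same (strictly positive) height as $\omega'$; and the recursive clause~\eqref{ee}, where the input has the form $\omega^*\F$ with $\omega^*$ a positive path of length $n-2$ and $\extend(\omega^*) = \omega_0$. The first two subcases each contribute $p\cdot r^2$. For clause~\eqref{ee}, the contribution is
\[
p\cdot r^2\cdot\sum_{\omega^*}\Pr\bigl[\extend(\omega^*) = \omega_0\bigr],
\]
the sum running over positive paths $\omega^*$ of length $n-2$. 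I would invoke the induction hypothesis on $\omega_0$, a positive path of length $n-1$ with $h(\omega_0) = h(\omega') > 0$, applied with the deterministic ``$p = 1$'' marking: this gives $\sum_{\omega^*}\Pr[\extend(\omega^*) = \omega_0] = r$, so clause~\eqref{ee} contributes $p\cdot r^3$. Summing the three subcases yields $2pr^2 + pr^3 = pr(2r + r^2) = pr$, using the defining identity $2r + r^2 = 1$.

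The base cases $n=0$ (input $\epsilon$, only valid output $\U$) and $n=1$ (input $\U$, only valid output $\U\U$) each reduce to a single invocation of clause~\eqref{ea} and match the claim immediately. The main obstacle will be the bookkeeping in the $\F$-ending case: I must check that the three subcases correspond to distinct inputs (so nothing is double-counted), and that the positivity requirement on the input $\omega$ is consistent with the hypothesis $h(\omega')>0$ in every case---both of which ultimately rest on the fact that the final height of $\omega'$ is strictly positive.
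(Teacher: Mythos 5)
Your proof is correct and takes essentially the same route as the paper's: induction on $n$ with a case split on the last step of $\omega'$, clauses \eqref{ea}--\eqref{eb} giving $pr$ directly, and the $\F$-ending case summing the contributions of \eqref{ec}, \eqref{ed} and the recursive clause \eqref{ee} to $pr^2+pr^2+pr^3=pr$, with the hypothesis $h(\omega')>0$ invoked exactly where the paper needs it (positivity of the input $\omega_0\D$ in \eqref{ed}, and applicability of the induction hypothesis to $\omega_0$). One trivial remark: in the $\D$-ending case the positivity of the input $\omega'$ minus its last step is automatic (any prefix of a positive path is positive), so the appeal to $h(\omega')>0$ there is unnecessary, though harmless.
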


\begin{proof}
We establish the result by induction on~$n$. If $\omega'$ ends with a $\U$
(resp.~$\D$), then it can only be built with \eqref{ea} (resp.\ \eqref{eb}),
yielding a probability~$pr$. If it ends with an $\F$, then it can be built
with \eqref{ec}, \eqref{ed} or \eqref{ee}, yielding a probability~$pr^2 + pr^2
+ pr^3 = pr$ by induction hypothesis. Note that the condition $h(\omega') > 0$
is necessary in order to build $\omega'$ using \eqref{ed}.
\end{proof}

Let $q_n = 1/(n+r)$. We define the $\recover$ function in the following way:
\begin{mycases}{\recover\colon}{\sigma\F\tau}
{\sigma\tau}\mapsto\sigma\ttau&with proba.\hquad$q_n$, \label{rsa}\NL
{\sigma\F\tau}\mapsto\atleast{\extend(\sigma\ttau)}2\,\F&with proba.\hquad$q_n$,
\label{rsb}\NL
{\omega\D}\mapsto\omega\F&with proba.\hquad$q_nr$.
\label{rsc}
\end{mycases}
To legitimize this definition, we note that there are $\abs\omega$
factorizations of~$\omega$ of type $\sigma\tau$ and $\abs\omega_\F$
factorizations of type~$\sigma\F\tau$; those two numbers sum to~$n$.
Rejection occurs in the case~\eqref{rsb} when $\extend(\sigma\ttau)$ has
height~$0$.

\begin{lemma} \label{lem:schroeder}
Let $n$ be odd and let $\omega$ be equal to every Lukasiewicz path of
length~$n$ with probability~$p$. Then, for every positive path $\omega'$ of
length~$n$, we have:
\[\Pr\bigl[\recover(\omega) = \omega'\bigr] = pq_n\]
and for every positive path~$\omega'$ of length~$n+1$ ending with~$\F$, we
have:
\[\Pr\bigl[\recover(\omega) = \omega'\bigr] = pq_nr\text.\]
\end{lemma}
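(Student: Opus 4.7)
The plan is to mimic the case analysis of Lemma~\ref{lem:motzkin}: for each candidate output~$\omega'$ I identify which of the branches \eqref{rsa}, \eqref{rsb}, \eqref{rsc} of $\recover$ can produce it, count the (Łukasiewicz input, internal random choices) preimages, and weight by the appropriate probabilities. Because $n$ is odd, the length--height parity of Schröder paths forces every positive path of length~$n$ to have odd height~$\ge 1$ and every positive path of length~$n+1$ to have even height~$\ge 0$. Combined with the fact that \eqref{rsa} outputs a path of length~$n$ while \eqref{rsb} and \eqref{rsc} both output a path of length~$n+1$ ending with~$\F$, the analysis splits cleanly along the length of~$\omega'$ and, for length~$n+1$, along whether $h(\omega')$ is $0$ or $\ge 2$.

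For $\omega'$ of length~$n$ (hence odd height), only~\eqref{rsa} can produce it. Proposition~\ref{prop:unfold} supplies a unique factorization $\omega' = \sigma\ttau$ with $\sigma\tau$ Łukasiewicz of length~$n$; this singles out the Łukasiewicz input $\omega_0 = \sigma\tau$ and the internal factorization, giving probability~$pq_n$. For $\omega' = \omega''\F$ of length~$n+1$ and height~$0$, branch~\eqref{rsb} is killed by the $\atleast{\cdot}{2}$ guard, and only~\eqref{rsc} survives; the unique preimage is $\omega_0 = \omega''\D$, Łukasiewicz because $\omega''$ is an excursion, and the contribution is $p \cdot q_n r$.

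The substantive case is $\omega' = \omega''\F$ of length~$n+1$ and height~$\ge 2$, reachable only through~\eqref{rsb}. Here I would expand
\[
\Pr\bigl[\recover(\omega) = \omega'\bigr] = p q_n \sum_{\omega_0,\ \sigma\F\tau = \omega_0} \Pr\bigl[\extend(\sigma\ttau) = \omega''\bigr],
\]
where $\omega_0$ ranges over Łukasiewicz paths of length~$n$, and reparameterize the inner sum by $\rho := \sigma\ttau$. Applying Proposition~\ref{prop:unfold} to $\sigma\tau$ exhibits a bijection between the indices $(\omega_0, \sigma\F\tau)$ and the positive paths~$\rho$ of length~$n-2$ with odd height. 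The crucial observation is that $n-2$ is odd, so the Schröder length--height parity makes the odd-height condition automatic: the sum ranges over \emph{all} positive paths of length~$n-2$ with no restriction. Lemma~\ref{lem:extend} then applies, its hypothesis $h(\omega'')>0$ being guaranteed by $h(\omega'')\ge 2$, and gives $\sum_\rho \Pr[\extend(\rho) = \omega''] = r$, yielding the desired $pq_n r$.

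The main obstacle is precisely this last step: one must be sure that the unfolding bijection lines up exactly with the domain of Lemma~\ref{lem:extend}, with neither missing nor spurious paths. The parity argument disposes of this gracefully, but it is the check I would write out most carefully. The rest is routine bookkeeping to verify that the three length/height cases partition the support of $\recover$, so that no~$\omega'$ is double-counted and none is missed.
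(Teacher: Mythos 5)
Your proof is correct and takes essentially the same route as the paper: a case analysis on whether $\ell(\omega')$ is $n$ or $n+1$ and, in the latter case, on whether $h(\omega')$ is $0$ or $\ge 2$, with Proposition~\ref{prop:unfold} handling case \eqref{rsa} and Lemma~\ref{lem:extend} handling case \eqref{rsb}. The only difference is that you spell out the reparameterization of the \eqref{rsb} sum via the unfolding bijection and the parity of $n-2$, details the paper leaves implicit behind its citation of Lemma~\ref{lem:extend}.
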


\begin{proof}
If $\ell(\omega') = n$, then $\omega'$ can only be produced with \eqref{rsa},
with a probability~$pq_n$. If $\ell(\omega') = n+1$, then if $h(\omega') > 0$,
it can only be produced by \eqref{rsb}, and Lemma~\ref{lem:extend} shows that
it occurs with probability~$pq_nr$; if $h(\omega') = 0$, it can only be
produced by \eqref{rsc}, with probability $pq_nr$.
\end{proof}

\subsection{Main algorithms}

We are finally ready to present our algorithms for random Schröder positive
paths and excursions, which are a bit more involved than in the Motzkin case.
Again, analysis is postponed to Section~\ref{sec:analysis}.

\begin{algo}
\caption{Random Schröder positive path of length~$n$ or $n-1$}
\label{algo:schroeder}
$\omega\leftarrow\epsilon$\;
\While{$\ell(\omega) < n$}{
    $\omega\leftarrow\omega\U$, $\omega\F$ or $\omega\D$ with probabilities $r$, $r^2$ and
      $r$\;
    \lIf{$h(\omega) = -1$}{$\omega\leftarrow\recover(\omega)$}
}
\lIf{$\omega = \sigma\F$, $\ell(\sigma) = n-1$}{$\omega\leftarrow\sigma$}
\Return{$\omega$}
\end{algo}

As a consequence of Theorem~\ref{thm:recover} and Lemma~\ref{lem:schroeder},
the output of Algorithm~\ref{algo:schroeder} is either a path of length~$n$
with some probability~$p$ or a path of length~$n-1$ with probability~$pr$.

\begin{algo}
\caption{Random Schröder positive path of length~$n$ (odd)}
\label{algo:schroederodd}
$\omega\leftarrow$ random positive path of length~$n$ or $n-1$\;
\lIf{$\ell(\omega) = n-1$}{$\omega\leftarrow\atleast{\extend(\omega)}1$}
\Return{$\omega$}
\end{algo}

\begin{algo}
\caption{Random Schröder excursion of length~$n$ (even)}
\label{algo:schroederexc}
$\omega\leftarrow$ random positive path of length~$n$ or $n-1$\;
\uIf{$\ell(\omega) = n$}{
    $\omega\leftarrow\atleast{\extend(\omega)}1$\;
    $\sigma\ttau\leftarrow$ mid-height factorization of~$\omega$\;
    \Return{$\sigma\tau$ minus the last $\D$ step}
}
\Else{
    $\sigma\ttau\leftarrow$ mid-height factorization of~$\omega$\;
    \Return{$\sigma\F\tau$ minus the last $\D$ step}
}
\end{algo}

\begin{algo}
\caption{Random Schröder positive path of length~$n$ (even)}
\label{algo:schroedereven}
\uWp{$(n+1)/(n+1+r)$}{
    $\omega\leftarrow$ random positive path of length~$n$ or $n-1$\;
    \lIf{$\ell(\omega) = n-1$}{$\omega\leftarrow\atleast{\extend(\omega)}2$}
}
\Wp{$r/(n+1+r)$}{
    $\omega\leftarrow$ random excursion of length~$n$\;
}
\Return{$\omega$}
\end{algo}

In Algorithm~\ref{algo:schroederodd}, by Lemma~\ref{lem:extend}, every path of
length~$n$ is output with probability $p(1+r^2)$.

In Algorithm~\ref{algo:schroederexc}, if the \textbf{if} branch is taken,
extending produces every positive path of length~$n+1$ with probability~$pr$.
After folding, we get every Łukasiewicz path of length~$n+1$ with a
probability~$pr$ for each factorization of type~$\sigma\tau$ and each
factorization of type~$\sigma\F\tau$, which adds up to $pr(n+1)$ for each
path.

In Algorithm~\ref{algo:schroedereven}, finally, the branch in lines 1--3
produces every positive path with probability $p(n+1)/(n+1+r)$ and every
non-excursion positive path with probability $pr^2(n+1)/(n+1+r)$. The second
branch produces every excursion with probability $pr(n+1)r/(n+1+r)$. In total,
we get every positive path with probability $p(1+r^2)(n+1)/(n+1+r)$.

\subsection{Little Schröder paths}

Our last algorithms concern \emph{little Schröder paths}, which are Schröder
paths without flat steps at height~$0$ (entries A247623 and A001003 of the
OEIS). Non-little and little paths are closely linked, as evidenced by the
following classical bijection: by decomposing at the first flat step at
height~$0$, any non-little Schröder path can be written in a unique way as
$\sigma\F\tau$, where $\sigma$ is a little excursion. Define:
\[\lift\colon\sigma\F\tau \mapsto \sigma\U\tau\text.\]
Obviously, $\lift$ is a bijection between non-little paths of length~$n$ and
little paths of length~$n-1$ and height~$\ge1$. Moreover,
$\omega\mapsto\lift(\omega)\D$ is a bijection between non-little and little
excursions. This shows the well-known fact that the number of Schröder
excursions of any positive length is exactly twice the number of little
Schröder excursions and immediately gives Algorithm~\ref{algo:littleexc}.

\begin{algo}
\caption{Random little Schröder excursion of length~$n$ (even)}
\label{algo:littleexc}
$\omega\leftarrow$ random excursion of length~$n$\;
\lIf{$\omega$ is not little}{
    $\omega\leftarrow\lift(\omega)\D$
}
\Return{$\omega$}
\end{algo}

Sampling little positive paths is a bit more complicated. We need the
following lemma, which establishes how the $\extend$ operation behaves with
respect to little Schröder paths.

\begin{lemma} \label{lem:extendl}
Assume that $\omega$ is equal to every little positive path of length~$n$ with
probability~$p$. For every little positive path~$\omega'$ of
length~$n+1$, unless $\omega'$ has height~$1$ and ends with~$\F$, we have:
\[\Pr\bigl[\extend(\omega) = \omega'\bigr] = pr\text.\]
\end{lemma}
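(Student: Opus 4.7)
I would mirror the inductive proof of Lemma~\ref{lem:extend}, adapting each step to the little-path setting. The crucial new subtlety is that in case~\eqref{ee}, an input of the form $X\F$ is a valid \emph{little} positive path only when $X$ is itself little positive \emph{and} satisfies $h(X) > 0$, so that the appended $\F$ does not sit at height~$0$.

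My proof would induct on $n$. Fix a little positive path $\omega'$ of length $n+1$ not in the exceptional case, and case on the last step of $\omega'$. If $\omega'$ ends with $\U$ or $\D$, only case~\eqref{ea} or \eqref{eb} can produce it; the unique input is $\omega'$ stripped of its last step, which is automatically little (it shares its flat steps with $\omega'$) and positive, so the contribution is $pr$ as in the non-little argument.

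If $\omega'$ ends with $\F$, write $\omega' = \omega\F$. Littleness of $\omega'$ forces $h(\omega) > 0$, and the non-exception hypothesis sharpens this to $h(\omega) \ge 2$. Cases~\eqref{ec} and~\eqref{ed} each contribute $pr^2$, using $h(\omega\D) = h(\omega) - 1 \ge 1$ to see that $\omega\D$ is a valid little positive input. Case~\eqref{ee} contributes
\[pr^2 \sum_{Z} \Pr\bigl[\extend(Z) = \omega\bigr],\]
where $Z$ ranges over little positive paths of length $n-2$ with $h(Z) > 0$.

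The main obstacle is evaluating this restricted sum. Without the restriction $h(Z) > 0$, the induction hypothesis applied to $\omega$ (valid since $h(\omega) \ge 2$ excludes the exception) gives the unrestricted sum as $r$. It then remains to show the little excursions ($h(Z) = 0$) contribute~$0$: a short case check on the definition of $\extend$ shows that $\extend(Z) = \omega$ for an excursion~$Z$ is only possible via~\eqref{ea} (forcing $\omega = Z\U$ of height~$1$) or~\eqref{ed} (forcing $\omega$ to end with~$\F$ at height~$1$), both demanding $h(\omega) = 1$, which is excluded. Hence the sum equals $r$, case~\eqref{ee} contributes $pr^3$, and the total is $pr(2r+r^2) = pr$ by the defining identity $2r+r^2 = 1$. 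Finally, excursions $\omega'$ of height~$0$ cannot end with $\F$ (the $\F$ would sit at height~$0$, contradicting littleness), so they fall into the $\D$ branch already handled; this covers all little $\omega'$ outside the exceptional case.
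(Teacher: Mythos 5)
Your proof is correct and follows essentially the same route as the paper: induction on $n$, a case split on the last step of $\omega'$, and for the $\F$-ending case the observation that the non-exceptional hypothesis forces height $\ge 2$, so that the inputs relevant to the recursive case \eqref{ee} end their flat step above height~$0$ and the induction hypothesis applies. Your explicit verification that little excursions contribute nothing to the recursive sum is just the contrapositive of the paper's remark that any $\sigma$ with $\extend(\sigma)=\sigma'$ has height at least~$1$, so the two arguments coincide.
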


\begin{proof}
We proceed similarly to Lemma~\ref{lem:extend}, by induction on~$n$. If
$\omega'$ ends with $\U$ or $\D$, it has probability~$pr$. If it ends
with~$\F$, then it has height at least~$2$ (since it is a little path not
ending with~$\F$ at height~$1$). Write $\omega' = \sigma'\F$. Then $\omega$
can be $\sigma'\U$, $\sigma'\D$ or $\sigma\F$ when $\extend(\sigma) =
\sigma'$. In the latter case, $\sigma$ has height at least~$1$, which shows
that $\sigma\F$ is a little path. By induction hypothesis, we get a
probability of $pr^2 + pr^2 + pr^3 = pr$.
\end{proof}

\begin{algo}
\caption{Random little Schröder positive path of length~$n$ (even)}
\label{algo:littleeven}
$\omega\leftarrow$ random positive path of length~$n$\;
\If{$\omega$ is not little}{
    $\omega\leftarrow\extend(\lift(\omega))$\;
    \lIf{$\omega$ is not little}{reject}
}
\Return{$\omega$}
\end{algo}

\begin{algo}
\caption{Random little Schröder positive path of length~$n$ (odd)}
\label{algo:littleodd}
$\omega\leftarrow$ random little positive path of length~$n-1$\;
$\omega\leftarrow\extend(\omega)$\;
\lIf{$\omega = \sigma\F$ and $h(\omega) = 1$}{reject}
\lIf{$\omega = \sigma\D\D$ and $h(\omega) = -1$}{$\omega\leftarrow\sigma\F$}
\Return{$\omega$}
\end{algo}

To show the uniformity of the output of Algorithm~\ref{algo:littleeven}, note
that since $n$ is even, every positive path of length~$n-1$ has height $>0$.
Therefore, $\lift\omega$ is a uniformly distributed little positive path.
Moreover, no path of length~$n$ has height~$1$, so by Lemma~\ref{lem:extendl},
$\extend(\lift\omega)$ is also uniform when it is a little path.

Let $p$ be the probability of any path to be output by
Algorithm~\ref{algo:littleeven}. In Algorithm~\ref{algo:littleodd}, since $n$
is odd, the result of $\extend(\omega)$ is either a little positive path or a
little Łukasiewicz path. In the latter case, it necessarily ends with~$\D\D$
(since $\omega$ is a little path); every such path appears with
probability~$pr$. Lemma~\ref{lem:extendl} therefore shows the uniformity of
the output.

\begin{note}
For both excursions and positive paths, the lift operation can be made to work
in constant time: throughout all operations in the algorithms for Schröder
paths---adding a step, extending, unfolding---it is possible to keep track of
the first $\F$ step at height~$0$, if any, without any significant extra cost.
\end{note}

\section{Complexity analysis} \label{sec:analysis}

We analyse the algorithms presented in this paper in time and entropy
complexity, going as far as limit law analysis. For entropy complexity, we use
a slightly modified version of the model of \cite{knuth}: we assume that the
algorithms have access to a primitive giving a random step (a fair coin toss
in the Dyck case, a fair $3$-sided die in the Motzkin case, a $3$-sided die
with probabilities $r$, $r$ and $r^2$ in the Schröder case), which carries a
cost equal to the entropy of its distribution. Since the algorithms do not
make any expensive computations (outside of drawing random steps, which we
already accounted for), we choose for our measure of time complexity the
number of steps in memory read or written.

Define the \emph{time factor} of an algorithm to be its time complexity
divided by the number of steps of the output and, similarly, the \emph{entropy
factor} to be the entropy complexity divided by the entropy of the output.
In each case---positive paths, excursions, exact-size or not---the entropy of
the output is asymptotic to the entropy of the random path of length~$n$,
which is equal to the entropy of the random steps needed to generate it.

Let~$P$ be an inhomogeneous Poisson point process on $(0,1]$ with density
function $\lambda(x) = 1/(2x)$. Let $S$ be the sum, for all $x\in P$, of
independent variables distributed like~$\unif[0,x]$ (this law is the same as
in \cite{mdyck}, and more information---moments, density, tail
distribution---can be found in that paper; note that $S$ is well-defined
because, almost surely, the set $P$ is infinite but summable). Finally, let
$U$ be a random variable distributed like $\unif[0,1]$ independent from~$S$.

\begin{theorem}
Recovering algorithms for positive paths (Algorithms~\ref{algo:motzkin},
\ref{algo:schroeder}, \ref{algo:schroederodd}, \ref{algo:schroedereven},
\ref{algo:littleeven} and \ref{algo:littleodd}) have an entropy factor tending
to~$1$ and a time factor tending in distribution to $1 + S$ (expected
value~$5/4$). Algorithms for excursions (Algorithms~\ref{algo:motzkinexc},
\ref{algo:schroederexc} and \ref{algo:littleexc}) have an entropy factor
tending to~$1$ and a time factor tending in distribution to $1 + S + U$
(expected value~$7/4$).
\end{theorem}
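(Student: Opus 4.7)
The plan is to analyse the time and entropy complexity separately, and in each case reduce the question to a random walk description whose scaling limit gives the claimed Poisson-process variable. Throughout, I write $T_n$ for the time cost and $E_n$ for the entropy cost; the aim is to show $T_n/n \to 1+S$ (or $1+S+U$ for excursions) in distribution, and $E_n/nH \to 1$, where $H$ is the entropy of the underlying step distribution.

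First I would decompose the cost. Every algorithm draws a random step at each iteration, contributing $n$ reads/writes and $nH$ bits of entropy. The remaining cost comes from (a) the invocations of $\recover$, each of which reads the current path of length $k_i$ at some recovery time $\tau_i$, performs a constant-work-per-step unfold/flip/extend, and writes out a path of similar length; (b) for excursion algorithms, one final pass through the path (extend + mid-height factorisation + output); and (c) a few $O(\log n)$-bit choices per $\recover$ call (picking a factorisation, a case among \eqref{ra}--\eqref{rc} etc.) plus the expected cost of rejected runs. The entropy factor is handled first: the number of recoveries on a length-$n$ walk is $O(\sqrt n)$ in expectation (since recoveries correspond to record minima reached below~$0$ by the underlying walk, a classical $\Theta(\sqrt n)$-scaled quantity), so the extra $O(\log n)$ bits per recovery amount to $o(n)$. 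Similarly, the rejection probabilities in \eqref{rc}, \eqref{rcc}--\eqref{rcd}, \eqref{rsb}--\eqref{rsc} and in the final $\atleast{\cdot}{}$ checks of the excursion algorithms are each $O(1/n)$, so the expected number of restarts is $1+O(1/n)$, making wasted entropy negligible. This gives $E_n/nH \to 1$.

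The heart of the proof is the distributional convergence of $T_n/n$. Writing the cost of the recoveries as $n + R_n$ where $R_n = \sum_i c\cdot k_i$ (up to the constant-factor book-keeping), I would follow the argument of \cite{mdyck}: the recoveries occur exactly at the successive record-low times $\tau_1 < \tau_2 < \dots$ of the underlying walk. Under the invariance principle, the rescaled record-low times $(\tau_i/n)$ converge jointly to the record-low times of standard Brownian motion on $[0,1]$, whose intensity on $(0,1]$ is known to be $1/(2x)\,\dx$ — this is exactly the Poisson process~$P$. Conditional on a recovery at rescaled time $x$, the path present at that moment is (after previous recoveries) a uniform positive path of length $\approx xn$; the cost of the recovery operation is proportional to its length, and the combinatorial structure of $\recover$ (choosing a uniform factorisation $\sigma\tau$) makes the \emph{portion actually rewritten} asymptotically uniform on $[0,xn]$, contributing a $\unif[0,x]$ summand in the limit. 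Summing over recoveries gives $R_n/n \Rightarrow S$, whence $T_n/n \Rightarrow 1+S$ with $\Ex[1+S] = 1 + \int_0^1 (x/2)\,\dx\cdot(1/x) = 5/4$ as computed in \cite{mdyck}.

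For the excursion algorithms, the additional $U$ term comes from the final transformation. In Algorithms~\ref{algo:motzkinexc} and \ref{algo:schroederexc} the mid-height factorisation $\sigma\ttau$ is computed and $\sigma\tau$ is then written out; the position of the split, equivalently the last visit of the path to its median height, is asymptotically uniform on $[0,n]$, so the extra pass through the segment $\tau$ contributes a $\unif[0,1]$ cost independent of the recovery process, yielding $T_n/n \Rightarrow 1+S+U$ with mean $5/4 + 1/2 = 7/4$. Independence of $U$ and $S$ follows because the split position depends only on the final path, whose law conditional on the recovery record is still uniform on positive paths. The main obstacle is the rigorous joint convergence of the recovery times and the lengths read at those times to the Poisson-process description: this requires quantifying the error in the invariance principle simultaneously for several functionals of the walk (record-low times and the lengths of the preceding excursions), and handling the small but nonzero rejection probabilities so that they do not disturb the scaling limit. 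These steps can be adapted from \cite{mdyck} with minor modifications to accommodate the three-sided step distribution (Motzkin) and the non-uniform $r,r^2,r$ distribution (Schröder), both of which have zero drift and finite variance so that the same Brownian scaling applies.
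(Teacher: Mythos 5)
There is a genuine gap at the heart of your argument: the recovery times are \emph{not} the record-low (ladder) epochs of the underlying random walk, and the Brownian invariance principle you invoke does not apply. Each call to $\recover$ replaces the current path by a uniform positive path of the same length, which typically ends at height $\Theta(\sqrt i)$; the walk therefore does not drift back to the $x$-axis at the ladder-epoch rate, and the number of recoveries is $\Theta(\log n)$ in expectation, not $\Theta(\sqrt n)$. Indeed your own two claims contradict each other: $\Theta(\sqrt n)$ record minima cannot converge to a Poisson process of intensity $1/(2x)\,\dx$ on $(0,1]$, which has only $\Theta(\log n)$ points above scale $1/n$. The correct mechanism, and the one the paper uses, is purely combinatorial: Lemmas~\ref{lem:motzkin} and~\ref{lem:schroeder} give the exact per-length recovery probability $q_i/(1+q_i)=1/(2i)+\bigO(1/i^2)$ (Motzkin; $1/i+\bigO(1/i^2)$ at odd lengths for Schröder), and---crucially---because the recovering property makes the law of the path the same whether or not recovery occurred, these events are \emph{independent} across lengths. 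One then couples the independent Bernoulli indicators directly with the Poisson process $P$ (with error $\bigO(1/i^2)$ per length) and attaches the $\unif[0,i/n]$ cost of rewriting the uniformly chosen suffix $\tau$; no invariance principle, Brownian motion, or joint convergence of walk functionals is needed. Your expectation computation also slips ($\int_0^1\frac x2\cdot\frac1{2x}\,\dx=\frac14$, with the density $1/(2x)$, not $1/x$), though you state the right value $5/4$.

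A secondary error: rejection is not an event of probability $\bigO(1/n)$ per run. At a recovery at length $i$ the rejection probability is $\bigO(1/i)$, so over a whole run the rejection probability is a constant bounded away from $0$ (about $14\%$ for Motzkin, $6\%$ for Schröder, as the paper computes). What saves the complexity is not rarity but cheapness: unconditionally rejection at length $i$ has probability $\bigO(1/i^2)$, so the expected length at which a run dies is $\bigO(\log n)$ and the expected number of restarts is $\bigO(1)$, making the wasted time $\bigO(\log n)$; only the final-stage checks (e.g.\ in Algorithms~\ref{algo:schroederodd}--\ref{algo:schroedereven}) are $\bigO(1/n)$-rare. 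Your treatment of the entropy factor and of the extra $\unif[0,1]$ term for excursions (uniform position of the mid-height factorization, independent of the recovery cost) is in the right spirit and matches the paper, but the central limit-law step needs to be replaced by the independence-plus-coupling argument above.
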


For comparison, the Florentine algorithms for positive paths have, on average,
entropy and time factors of~$2$. Their limit law is discussed in
\cite{rejection}.

\begin{proof}
Let us start with Algorithms \ref{algo:motzkin} and \ref{algo:schroeder} (the
basic recovering algorithms). We also, for the moment, ignore the possibility
of rejection and consider only the final, successful run. There are two costs
to account for: the cost of drawing steps and writing them to memory (a fixed
time and entropy factor of~$1$) and the cost of recovery.

From Lemmas~\ref{lem:motzkin} and \ref{lem:schroeder}, the probability that
recovery occurs at length~$i$ is $q_i/(1+q_i) = 1/(2i) + \bigO(1/i^2)$
(Motzkin) or $1/i + \bigO(1/i^2)$ when $i$ is odd (Schröder). Moreover, since
the distribution of the path is the same whether we recover or not, recovery
occurs independently for each length. When recovery occurs at length~$i$, the
cost in entropy is $\bigO(\log i)$; the average total entropy cost is
therefore $\bigO(\log^2n)$, which is negligible.  The cost in time is the size
of a uniform random right factor of the path (the $\tau$ part of the
path~$\sigma\tau$), which gives a time factor of $\unif[0,i/n] +
\bigO(1/n)$. Let $R_n$ be the total cost of recovery and $R_{n,i}$ be the
contribution of length~$i$; let $x_i = i/n$, $y_i = x_{i+1}$ (Motzkin) or
$x_{i+2}$ (Schröder) and $S_i$ be the contribution of the interval $[x_i,y_i)$
to $S$. Under some coupling, we have:
\[
\Ex\left\lvert
\bernoulli\biggl(\frac{q_i}{1+q_i}\biggr)
-
\poisson\int_{x_i}^{y_i}\lambda(x)\dx
\right\rvert = \bigO\biggl(\frac1{i^2}\biggr)
\text.\]
This implies that $R_n$ and $S$ can be coupled so that
$\Ex\left\lvert R_{n,i} - S_i\right\rvert = \bigO(1/(in))$. Since the interval
$(0,1/n)$ contributes $\bigO(1/n)$ on average to~$S$, summing over $i$ yields
$\Ex\left\lvert R_n - S\right\rvert = \bigO(\log n/n)$. Therefore, $R_n$ tends
in distribution to~$S$.

Let us now estimate the cost of the unsuccessful runs. If we reach the
length~$i-1$, the probability of rejection at length~$i$ is $\bigO(1/i)$ on
top of the probability to recover, or $\bigO(1/i^2)$. The probability of
reaching length at least $n$ is therefore:
\[\prod_{i=1}^n1 - \bigO\biggl(\frac1{i^2}\biggr) = \Omega(1)\text.\]
Thus, we reject $\bigO(1)$ times on average.
Moreover, the average length at which rejection occurs is:
\[\sum_{i=1}^ni\cdot\bigO\biggl(\frac1{i^2}\biggr) = \bigO(\log n)\text.\]
Therefore, the total average cost of rejection is $\bigO(\log n)$, which is
negligible.

Finally, except in cases of probability $\bigO(1/n)$, all algorithms for
positive paths call the basic algorithm and then perform operations taking
constant time, so their complexity is the same. The algorithms for excursions,
again except in cases with probability~$\bigO(1/n)$, sample a positive path
and fold; the latter costs no entropy and entails accessing a uniformly
distributed right factor of the path, hence the result.
\end{proof}

\begin{note}
We can compute explicitly the probability of never having to reject. Let $p_n$
be the probability of any one positive path of length~$n$ to be output. In the
Motzkin case, by Lemma~\ref{lem:motzkin}, we have $p_n = p_{n-1}/3\,[1 +
1/(2n+1)]$, which entails:
\[p_n = 3^{-n}\prod_{i=1}^n\frac{2i+2}{2i+1} =
3^{-n}\frac{\Gamma(n+2)\Gamma(3/2)}{\Gamma(n+1/2)} \sim 3^n\frac{\sqrt{\pi
n}}2\text.\]
Let $M_n$ be the number of positive paths of length~$n$. Classically, we have
$M_n\sim3^{n+1/2}/\sqrt{\pi n}$. The probability of reaching length~$n$ is
therefore:
\[p_nM_n\to\frac{\sqrt3}{2}\text.\]
This means that we have a more than 86\% chance of succeeding in the first
try.

In the Schröder case, we have, by Lemma~\ref{lem:schroeder}, $p_n = p_{n-1}r[1
+ 1/(n+r)]$ if $n$ is odd and $p_n = p_{n-1}r$ if $n$ is even. Therefore, we
have:
\[p_n = r^n\prod_{i=1}^{\lceil n/2\rceil}\frac{2i+r}{2i-1+r}\sim r^n\sqrt{n/2}
\frac{\Gamma\bigl(\frac{1+r}2\bigr)}{\Gamma\bigl(1+\frac r2\bigr)}\text.\]
Let $S_n$ be the number of positive paths of length~$n$. Using the estimate
$S_n\sim 2^{-3/4}/(r^{n+1}\sqrt{\pi n})$, the probability of reaching at least
length~$n$ is:
\[p_nS_n + p_nrS_{n-1} \to \frac{2^{1/4}}{\sqrt\pi}
\frac{\Gamma\bigl(\frac{\sqrt2}2\bigr)}{\Gamma\bigl(\frac{1+\sqrt2}2\bigr)}
\text.\]
Thus, we have a more than 94\% chance of succeeding in the first try.
\end{note}

\bibliographystyle{abbrv}

\bibliography{biblio}{}

\begin{thebibliography}{10}

\parskip0pt

\bibitem{mdyck}
A.~Bacher.
\newblock A new bijection on m-{D}yck paths with application to random
  sampling.
\newblock In {\em GASCom~2016}, 2016.
\newblock \url{http://arxiv.org/abs/1603.06290}.

\bibitem{motzkin}
A.~Bacher, O.~Bodini, and A.~Jacquot.
\newblock Efficient random sampling of binary and unary-binary trees via
  holonomic equations.
\newblock {\em Theoretical Computer Science}, 695(Supplement C):42 -- 53, 2017.

\bibitem{rejection}
A.~Bacher and A.~Sportiello.
\newblock Complexity of anticipated rejection algorithms and the
  {D}arling–{M}andelbrot distribution.
\newblock {\em Algorithmica}, 74(1):1--20, 2015.

\bibitem{barcucci}
E.~Barcucci, R.~Pinzani, and R.~Sprugnoli.
\newblock The random generation of directed animals.
\newblock {\em Theoret. Comput. Sci.}, 127(2):333--350, 1994.

\bibitem{barcucci2}
E.~Barcucci, R.~Pinzani, and R.~Sprugnoli.
\newblock The random generation of underdiagonal walks.
\newblock {\em Discrete Math.}, 139(1-3):3--18, 1995.
\newblock Formal power series and algebraic combinatorics (Montreal, PQ, 1992).

\bibitem{devroye}
L.~Devroye.
\newblock Simulating size-constrained {G}alton-{W}atson trees.
\newblock {\em SIAM J. Comput.}, 41(1):1--11, 2012.

\bibitem{boltzmann}
P.~Duchon, P.~Flajolet, G.~Louchard, and G.~Schaeffer.
\newblock Boltzmann samplers for the random generation of combinatorial
  structures.
\newblock {\em Combin. Probab. Comput.}, 13(4-5):577--625, 2004.

\bibitem{knuth}
D.~E. Knuth and A.~C. Yao.
\newblock The complexity of nonuniform random number generation.
\newblock In {\em Algorithms and complexity ({P}roc. {S}ympos.,
  {C}arnegie-{M}ellon {U}niv., {P}ittsburgh, {P}a., 1976)}, pages 357--428.
  Academic Press, New York, 1976.

\bibitem{lothaire}
M.~Lothaire.
\newblock {\em Applied combinatorics on words}, volume 105 of {\em Encyclopedia
  of Mathematics and its Applications}.
\newblock Cambridge University Press, Cambridge, 2005.

\bibitem{louchard}
G.~Louchard.
\newblock Asymptotic properties of some underdiagonal walks generation
  algorithms.
\newblock {\em Theoret. Comput. Sci.}, 218(2):249--262, 1999.
\newblock Caen '97.

\bibitem{penaud}
J.-G. Penaud, E.~Pergola, R.~Pinzani, and O.~Roques.
\newblock Chemins de {S}chr\"oder et hi\'erarchies al\'eatoires.
\newblock {\em Theoret. Comput. Sci.}, 255(1-2):345--361, 2001.

\bibitem{remy}
J.~R{\'e}my.
\newblock Un proc{\'e}d{\'e} it{\'e}ratif de d{\'e}nombrement d'arbres binaires
  et son application {\`a} leur g{\'e}n{\'e}ration al{\'e}atoire.
\newblock {\em ITA}, 19(2):179--195, 1985.

\bibitem{oeis}
N.~J.~A. Sloane.
\newblock The on-line encyclopedia of integer sequences.
\newblock {\em Notices Amer. Math. Soc.}, 50(8):912--915, 2003.
\newblock \texttt{http://oeis.org}.

\end{thebibliography}

\end{document}